\newtheorem{claim}{Claim}
\newcommand{\dhc}{\textsc{Directed-Happy-Cat}}
\newcommand{\uhc}{\textsc{Undirected-Happy-Cat}}
\newcommand{\smcvp}{\textsc{SMCVP}}
\newcommand{\gcat}{G_C}
\newcommand{\gmouse}{G_M}
\newcommand{\gcmh}{\langle G, c, m, h \rangle}
\title{Undirected Cat-and-Mouse is P-complete}
\author{Arefin Huq\\
Georgia Institute of Technology\\
arefin@gatech.edu}
\date{}
\begin{document}
\maketitle

\begin{abstract}
Cat-and-mouse is a two-player game on a finite graph. Chandra and Stockmeyer showed cat-and-mouse is P-complete on directed graphs. We show cat-and-mouse is  P-complete on undirected graphs.

To our knowledge, no proof of the directed case was ever published. To fill this gap we give a proof for directed graphs and extend it to undirected graphs. The proof is a reduction from a variant of the circuit value problem.
\end{abstract}

\section{Introduction}
Cat-and-mouse is a two-player game on a finite directed or undirected graph. The directed version was presented in Chandra and Stockmeyer's seminal paper on alternation \citep{DBLP:conf/focs/ChandraS76} as an example of a game that is P-complete.\footnote{The actual statement was that cat-and-mouse is log-complete for alternating logspace, which by a central result of the same paper is equal to polynomial time.} The undirected version appears in an exercise from Sipser's classic textbook on the theory of computation \cite{sipser-book}.

In both versions, the players Cat and Mouse alternate traversing edges from the node they are on to an adjacent node. Cat's goal is to catch Mouse by occupying the same node. Mouse's goal is to get to a designated node (the Hole) before that happens. 

In this paper we show that cat-and-mouse on undirected graphs is P-complete, extending Chandra and Stockmeyer's result for the directed case. The proof is a reduction from a variant of the circuit value problem.

\subsection{Missing Proof of the Directed Case}

To our knowledge, no proof of the directed case has ever been published. Chandra and Stockmeyer's result for directed cat-and-mouse appears as Theorem 4.2 of the conference version of their alternation paper \citep{DBLP:conf/focs/ChandraS76} with a placeholder stating, ``The proof of Theorem 4.2 will appear in the final version of this paper.'' However, the final version \citep{DBLP:journals/jacm/ChandraKS81} makes no mention of cat-and-mouse or Theorem~4.2.

The closest thing to a published proof we found is Section~A.11.2 of the book on parallel computation by Greenlaw, Hoover and Ruzzo \citep{p-complete-compendium}, which proposes a reduction from a logspace alternating Turing machine and describes how to handle existential configurations. The authors cite a personal communication from Stockmeyer, which leads us to believe the original unpublished proof took this form.

To fill this gap in the literature we prove the directed case and then generalize to the undirected case. Our reduction is from a circuit rather than a logspace alternating Turing machine.

\section{Definitions}

\subsection{The cat-and-mouse game}
\label{sec:sipser-cat-and-mouse}
The following is Problem 8.15 of Sipser \cite{sipser-book}:
\begin{framed}
{\em The cat-and-mouse game is played by two players, ``Cat'' and ``Mouse,'' on an arbitrary undirected graph. At a given point each player occupies a node of the graph. The players take turns moving to a node adjacent to the one that they currently occupy. A special node of the graph is called ``Hole.'' Cat wins if the two players ever occupy the same node. Mouse wins if it reaches the Hole before the preceding happens. The game is a draw if a situation repeats (i.e., the two players simultaneously occupy positions that they simultaneously occupied previously and it is the same player's turn to move).}
\begin{align*}
&\textsc{Happy-Cat} = \\
&\qquad\{ \gcmh \,\vert\, G,c,m,h, \text{are respectively a graph, and } \\
&\qquad\qquad\text{positions of the Cat, Mouse, and Hole, such that} \\
&\qquad\qquad\text{Cat has a winning strategy if Cat moves first} \}.
\end{align*}
{\em Show that \textsc{Happy-Cat} is in P.}
\end{framed}

We use this definition of cat-and-mouse, though for clarity we refer to \textsc{Happy-Cat} as \uhc{}. We define \dhc{} identically except that $G$ is a directed graph.

\subsubsection{Differences with Chandra and Stockmeyer}
The game Chandra and Stockmeyer \citep{DBLP:conf/focs/ChandraS76} present differs from \dhc{} in ways that do not affect our argument. Here are the differences along with explanations of why they do not affect our construction, assuming optimal play:
\begin{enumerate}
\item \textit{Either player may pass at any time.}
In our construction passing never helps.

\item \textit{Cat may not occupy Hole.}
In our construction, if Cat can beat Mouse to Hole, Cat can win before then.

\item \textit{Mouse moves first.}
Cat's first move is extraneous -- see Section~\ref{sec:opening-move}. 

\item \textit{The language is the set of instances where Mouse wins.}
In our construction draws can't happen under optimal play, so Mouse wins exactly if Cat doesn't.
\end{enumerate}

\subsection{Synchronous Monotone Circuit Value Problem}

A Boolean circuit is \emph{synchronous} if all paths from input to output have the same length, and is \emph{monotone} if it only has AND and OR gates. The synchronous monotone circuit value problem (here denoted \smcvp{}) asks whether a given encoding of a synchronous monotone Boolean circuit evaluates to true on a given input. Greenlaw, Hoover and Ruzzo \citep{p-complete-compendium} showed this problem is P-complete even when all gates have fan-in and fan-out two.

\section{The construction}

Recall that a language is \emph{P-complete} if it is in P and every language in P is logspace reducible to it. Showing that cat-and-mouse is in P is left in the form given by Sipser: as an exercise for the reader. (The proof for the directed case is similar to the undirected case.) To show that cat-and-mouse is P-hard it suffices to give a reduction from \smcvp{}.

Given a synchronous monotone circuit $C$ and input assignment $x$, we show how to construct an instance of the directed cat-and-mouse game such that Cat has a winning strategy exactly when the circuit is \emph{false}. We then show how to convert this to an undirected game with the same property. The rough idea is that the game graph encodes two parallel copies of the circuit. Mouse races from the output to an input, chased by Cat. If the circuit is true Mouse will have a path that evades Cat, otherwise Cat can intercept any path Mouse takes.

A schematic of the construction for both cases is depicted in Figure~\ref{fig:overview}. Mouse moves from $m$ toward $h$ in the Mouse subgraph, and Cat (starting from $c$) mirrors Mouse in the Cat subgraph. If either player deviates from this strategy they lose. If both players follow the strategy then Mouse makes it to Hole exactly when $C(x) = 1$.

We should point out that our construction for the directed case is more complicated than necessary. The payoff for the added complexity is that the transformation to the undirected case is straightforward.

\subsection{Directed case}

We construct a directed game graph $G_D$ from $C$ and $x$ as follows:

\begin{enumerate}
\itemsep 0pt
\parskip 0pt
\item Let $G$ denote the DAG corresponding to $C$.
\item Construct $G'$ by replacing each gate node in $G$ with the directed gate gadget (Figure~\ref{fig:gate-gadget}).
\item Create the Cat and Mouse subgraphs, $\gcat$ and $\gmouse$, as copies of $G'$.
\item \label{item:start-nodes} Add a node $c$ and connect from it to the output of $\gcat$. Label the output node of $\gmouse$ as $m$. These are the starting positions of Cat and Mouse, respectively.
\item Add a node $h$ (the Hole). Connect to it from the nodes in $\gcat$ and $\gmouse$ that correspond to true inputs of $x$.
\item \label{item:dead-end} Add a node $d$ (not shown). Connect to it from the nodes in $\gmouse$ that correspond to false inputs and from all nodes in $\gcat$ that correspond to inputs.
\item \label{item:threat-edges} Add threat edges (Section~\ref{sec:and-gates}) to enforce AND-gate semantics.
\item \label{item:escape-routes} Add escape routes (Section~\ref{sec:escape-routes}) to force Cat to mirror Mouse.
\end{enumerate}

\subsection{The construction - undirected case}
From $G_D$ we construct an undirected game graph $G_U$ as follows:
\begin{enumerate}
\itemsep 0pt
\parskip 0pt
\item Replace each edge in $G_D$ with an undirected edge.
\item \label{item:guard-edges} Add guard edges (Section~\ref{sec:guard-edges}) to force Mouse forward.
\end{enumerate}

\begin{figure}[ht]
\centering
\includegraphics{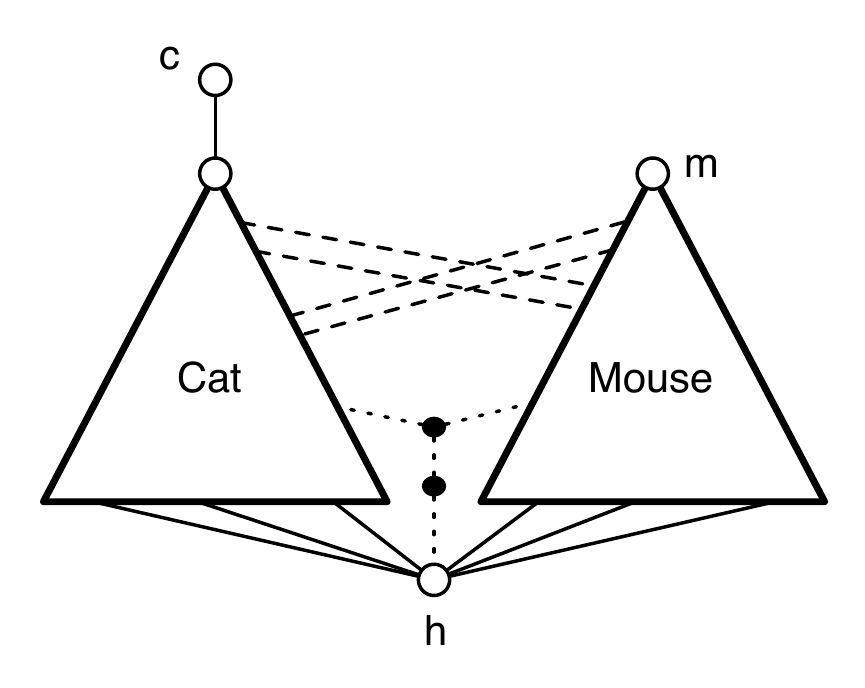}
\caption{Overview of the construction, showing the Cat and Mouse subgraphs, the starting positions $c$ and $m$, and the Hole $h$. In the directed case, all directed edges are from higher levels to lower levels. The dashed lines represent the threat edges (and guard edges for the undirected case) between the subgraphs. One escape route is shown by the dotted lines and filled in nodes connected to $h$.}
\label{fig:overview}
\end{figure}

\begin{figure}[ht]
\centering
\includegraphics{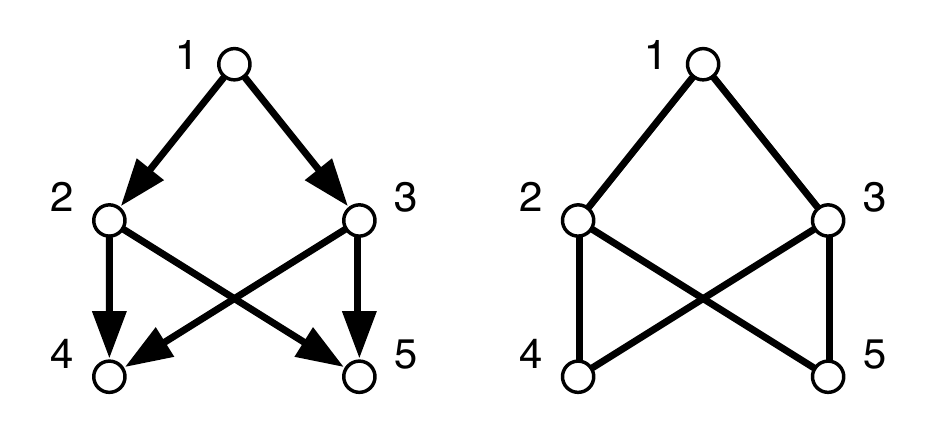}
\caption{The gate gadget, directed (left) and undirected (right).}
\label{fig:gate-gadget}
\end{figure}

\subsection{Mirroring}
The subgraphs $\gcat$ and $\gmouse$ are identical, so there is a one-to-one correspondence between their nodes. If $u$ is a node in $\gmouse$ then let $\gcat(u)$ denote the corresponding node in $\gcat$. We say that Cat \emph{mirrors} Mouse if Cat moves to $\gcat(u)$ whenever Mouse moves to a node $u$ in $\gmouse$.  (There is one exception: in the gadget for an AND gate, shown in Figure~\ref{fig:and-basic}, when Mouse moves to M2 or M3, Cat is free to move to either C2 or C3; this is still considered mirroring.)

\subsection{The opening move}
\label{sec:opening-move}
Cat is required to move first but we want Cat to follow Mouse. To fix this, in step \ref{item:start-nodes} of the construction we constrain Cat's first move to be from $c$ to the output node of $\gcat$. After this the players are in the same layer with Mouse to move.

\subsection{Encoding OR-gates}

For each OR-gate there is an instance of the gate gadget (Figure~\ref{fig:gate-gadget}) in each of the Cat and Mouse subgraphs. Node 1 corresponds to the output of the gate and nodes 4 and 5 correspond to the two inputs.

Typically, Mouse enters the gadget through Node 1 and exits through Node 4 or 5, on a true branch if possible. Mouse can exit on a true branch iff the gate is true. Typically, Cat will mirror Mouse.

Only nodes 1-3 are really needed to encode OR-gate semantics. The rest are there so that the gadget is the same for OR and AND gates.

\subsection{Encoding AND-gates}
\label{sec:and-gates}

\begin{figure}[ht]
\centering
\includegraphics{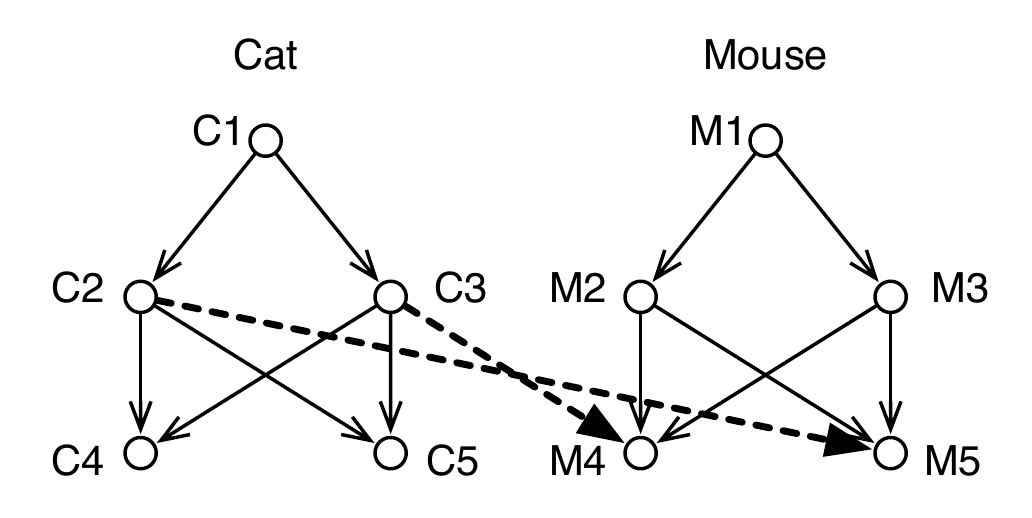}
\caption{The gate gadget with threat edges (dashed lines) to enforce AND-gate semantics.}
\label{fig:and-basic}
\end{figure}

The encoding of AND gates is accomplished by the gate gadget combined with \emph{threat edges} as shown in Figure~\ref{fig:and-basic}. These edges are shown as dashed lines for clarity but are the same as any other edge in the graph. As before node 1 is the output and nodes 4 and 5 are the inputs to the gate.

Suppose at some point Mouse moves to M1, with Cat mirroring to C1. There are three cases:
\begin{description}
\itemsep 0pt
\parskip 0pt
\item[Both inputs are true.]
There are paths from M4 and M5 to true gates. Mouse picks node M2 or M3. Cat picks either C2 (to prevent Mouse from going to M5) or C3 (to prevent Mouse from going to M4). Mouse can move to whichever node is not threatened by Cat and safely exit the gadget.
\item[Exactly one input is true.]
Cat moves to threaten the path down the true branch. Mouse can only go forward along the false branch.
\item[Neither input is true.]
Mouse can only go forward down a false branch.
\end{description}

\subsection{Forcing Cat to mirror Mouse: escape routes}
\label{sec:escape-routes}

It may be that Cat can prevent Mouse from winning when the circuit is true by not mirroring Mouse but rather taking some other path through the graph that later intercepts Mouse's path. To prevent this we add an \emph{escape route} to each branch of each gadget. Each escape route has the same length as any other forward path to Hole. If Cat mirrors Mouse, Cat prevents Mouse from going down the escape route. Otherwise Mouse can safely get to Hole down the escape route. Formally, the construction is:
\begin{quote}
\em{Given a gadget, let $k$ be the length of any path from the bottom layer (C4, C5, M4, M5) to $h$. Create a chain of $k - 2$ new nodes $t_1, \ldots, t_{k - 2}$ with an edge from each node to the next. Add edges from C4 and M4 to $t_1$ and an edge from $t_{k - 2}$ to $h$. Similarly, create a (disjoint) route from C5 and M5 to $h$.}
\end{quote}

The routes are at the bottom of each gadget to handle a technicality: suppose there is a true AND gate where both inputs are the same node. Suppose Cat moves to C2, forcing Mouse to M4. Cat can cross to M5 and intercept Mouse at the next gate, which is connected to both M4 and M5. Placing the escape routes at M4 and M5 forces Cat to mirror Mouse exactly.

\subsection{Forcing Mouse forward (undirected case): guard edges}
\label{sec:guard-edges}

\begin{figure}[ht]
\centering
\includegraphics{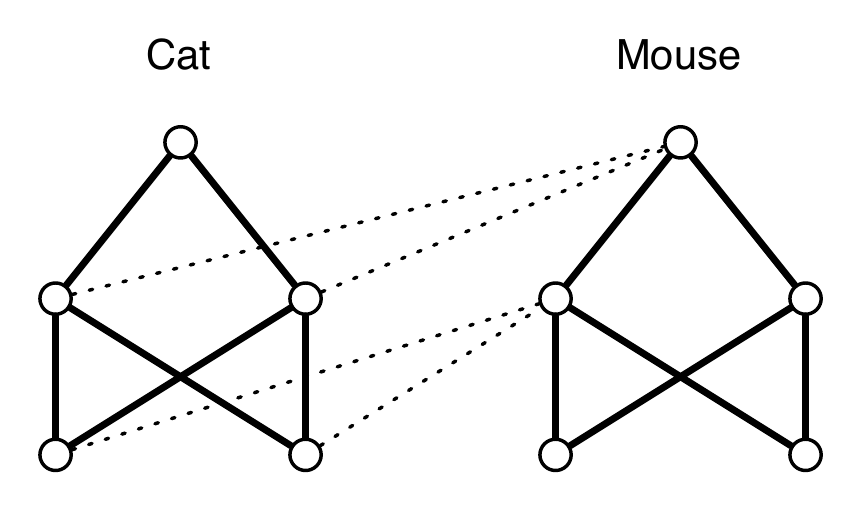}
\caption{A matched pair of gate gadgets with guard edges (dotted lines) for M1 and M2 shown (undirected case).}
\label{fig:guard-edges}
\end{figure}

The addition of \emph{guard edges} (see Figure~\ref{fig:guard-edges}) in step~\ref{item:guard-edges} of the undirected construction prevents Mouse from backtracking. The guard edges are shown with dotted lines for clarity; they are no different from any other edge.

Suppose that Mouse has moved to M2, with Cat mirroring to C2. If Mouse backtracks to M1, Cat can immediately catch Mouse by moving backward along the corresponding guard edge.

The rule for placing guard edges is:
\begin{quote}
\em{For any adjacent nodes $m_1$ and $m_2$ in $\gmouse$, with $m_2$ closer to Hole (so that Mouse backtracks if it moves from $m_2$ to $m_1$), place an edge between $m_1$ and $\gcat(m_2)$.}
\end{quote}
There is a one-to-one correspondence among guard edges, edges in $\gmouse$, and edges in $\gcat$. Thus in Figure~\ref{fig:guard-edges} there would also be guard edges going back from C1, from M3 to C4 and C5, and forward from nodes M4 and M5.

\section{Proof of the construction}

In the following, we show that given $C$ and $x$, and the constructions of $G_D$ and $G_U$, that:
\begin{align*}
\langle C, x \rangle &\in \smcvp{} \Leftrightarrow \\
\langle G_D, c, m, h \rangle &\in \dhc{} \Leftrightarrow \\
\langle G_U, c, m, h \rangle &\in \uhc{}.
\end{align*}

Recall that every edge in $G_D$ goes between two adjacent layers, and therefore that all paths from a given layer to the Hole have the same length. The same holds for all monotonic paths from a given layer in $G_U$.

\subsection{Directed case}

\begin{claim}
\label{claim:mouse-wins-directed}
Mouse wins if circuit is true.
\end{claim}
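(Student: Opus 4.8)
The plan is to exhibit an explicit winning strategy for Mouse and argue it succeeds by maintaining an invariant down the levels of the graph. Since $C(x) = 1$, I would first fix a top-down \emph{truth certificate}: a labeling of the gates in which the output is true, every true OR gate has at least one input labeled true, and every true AND gate has both inputs labeled true. Mouse's strategy is then to remain on nodes of $\gmouse$ corresponding to true wires, descending one layer per turn toward a true input connected to $h$. The invariant I would check just before each of Mouse's moves is: Mouse occupies a true node $u$ of $\gmouse$ and Cat occupies the mirror node $\gcat(u)$. After Cat's forced opening move (Section~\ref{sec:opening-move}) this holds with $u = m$, the true output node.

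The first observation is that while the invariant holds Mouse is safe: $\gcat$ and $\gmouse$ are disjoint copies, so Cat and Mouse never share a node, and by the leveled structure they stay in the same layer. When Mouse eventually reaches a true input node and steps to $h$, Cat---sitting at the mirror input in $\gcat$---cannot follow, since every input node of $\gcat$ leads only to the dead end $d$ (step~\ref{item:dead-end}) while only true inputs of $\gmouse$ connect to $h$. It therefore remains to show Mouse can keep descending through true nodes. At an OR gate, the output being true means some input is true, and Mouse simply walks down that branch, restoring the invariant once Cat mirrors. The delicate case is the AND gadget with threat edges (Figure~\ref{fig:and-basic}): here I would argue that when both inputs are true, Mouse moves to M2 or M3, Cat can threaten at most one of the two exit branches, and Mouse exits along the other---also true---branch, re-establishing the invariant.

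The remaining worry is that Cat abandons mirroring in order to cut across via a threat edge and intercept Mouse at a lower level. This is exactly what escape routes (Section~\ref{sec:escape-routes}) defeat. I would show that the moment Cat deviates---occupying anything other than the mirror of Mouse's position---Mouse turns onto the escape route attached to its current branch. The route is deliberately shared with Cat's mirror node, so that a \emph{mirroring} Cat would sit right behind Mouse and block it; but precisely because Cat has deviated, it is now too far away to step onto the route in time. Since each escape route has the same length as any forward path from that branch to $h$, Mouse reaches $h$ on schedule and Cat cannot intercept. Combining the three cases---OR descent, AND descent, and escape on deviation---yields a strategy that always delivers Mouse to $h$, proving the claim.

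I expect the main obstacle to be the AND-gate analysis together with the escape-route timing. The threat edges create a genuine two-move interaction inside the gadget, so I must verify both that Cat cannot threaten both true branches simultaneously and that the escape route cannot be short-circuited. The technicality flagged in Section~\ref{sec:escape-routes}---a true AND gate whose two inputs are the same node, where Cat could otherwise slip from C4 across to M5 and intercept---is exactly the configuration the argument must rule out, and confirming that placing the escape routes at the bottom of each gadget forces Cat into exact mirroring is where the proof has to be most careful.
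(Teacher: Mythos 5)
Your proposal is correct and follows essentially the same approach as the paper's proof: Mouse descends through true gates (choosing the unthreatened true branch at AND gadgets) while a mirroring Cat cannot catch it, and any deviation by Cat is punished by the escape routes, whose equal length guarantees Mouse reaches $h$ on schedule. Your invariant-based framing and explicit truth certificate are a slightly more structured presentation of the same case analysis the paper performs on Cat's possible strategies.
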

\begin{proof}
Mouse's strategy is to take a path of true gates from the output through a true input to the hole. We consider all possible strategies for Cat.
\begin{description}
\itemsep 0pt
\parskip 0pt
\item[Cat mirrors Mouse.]
Since the circuit is true, the top gate is true. If it is an OR gate then Mouse can advance along one of the branches into another true gate. If it is an AND gate then regardless of which branch Cat threatens, Mouse can avoid capture and advance to a true gate. Continuing in this way, since Mouse only moves through true gates Mouse will eventually advance to a true literal and thus to the Hole.
\item[Cat takes a non-mirroring path in $\gcat$.]
Mouse can get to $h$ along an escape route.
\item[Cat moves along a threat edge into $\gmouse$.]
Mouse will be on a branch with an escape route that is not threatened by Cat. Mouse can safely get to $h$ along this route.
\item[Cat moves onto an escape route.]
Mouse can freely get to $h$ along another escape route.
\end{description}
\end{proof}

\begin{claim}
\label{claim:cat-wins-directed}
Cat wins if circuit is false.
\end{claim}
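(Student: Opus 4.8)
The plan is to give Cat the mirroring strategy and argue it is winning, essentially dualizing the proof of Claim~\ref{claim:mouse-wins-directed}. Cat moves first from $c$ to the output of $\gcat$, after which Cat simply plays $\gcat(u)$ whenever Mouse plays $u$ (resolving the AND-gate freedom in Cat's favor, as described below). I would establish two things: first, that against a mirroring Cat, Mouse following forward edges is forced down a path of \emph{false} gates and hence into the dead end $d$ rather than $h$; and second, that every attempt by Mouse to leave this forced path is met by an immediate capture.

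For the forced-path argument I would induct down the layers, using the synchronicity of $C$ so that Cat and Mouse remain in the same layer at every step. Since the circuit is false, the output gate is false. At a false OR gate both inputs are false, so whichever branch Mouse takes leads to another false gate. At a false AND gate at least one input is false; if exactly one is true, Cat (already mirrored at C1) picks C2 or C3 to threaten the true branch, forcing Mouse down the false branch, and if both inputs are false Mouse descends a false branch regardless. This is exactly the gate analysis of Section~\ref{sec:and-gates} read from Cat's side. Iterating, Mouse reaches a false input literal of $x$ in $\gmouse$, whose only forward edge is to $d$ (step~\ref{item:dead-end}); Mouse is compelled to move to $d$, and Cat, sitting at the corresponding input node of $\gcat$ which is likewise joined to $d$, follows onto $d$ and captures Mouse.

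It remains to punish deviations. If Mouse steps onto an escape route from M4 or M5, then Cat, mirrored at C4 or C5, shares the entry node $t_1$ with Mouse (both C4 and M4 are joined to $t_1$), so Cat moves onto $t_1$ and catches Mouse; the same holds at every subsequent node of the route. Mouse cannot escape $\gmouse$ through a threat edge, since in the directed graph threat edges are oriented from $\gcat$ into $\gmouse$, and Mouse cannot backtrack because every edge of $G_D$ points from a higher layer to a lower one. The placement of escape routes at M4 and M5 (rather than higher in the gadget) is what makes the mirror robust even when an AND gate's two inputs are the same node: it forces a truly node-for-node mirror, so Cat never needs to guess and never falls a step behind.

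The main obstacle I anticipate is precisely this bookkeeping---verifying in each case that the mirroring Cat is never one move too late to make the capture, i.e.\ that the turn order (Mouse to move within each layer) together with the equal-length property of forward paths and escape routes keeps Cat on the matching node. Once that is checked, termination is immediate: $G_D$ is a layered DAG, so each move strictly lowers the layer, the game must end within the number of layers, no position can repeat (ruling out a draw), and Mouse never reaches $h$. Hence Cat captures Mouse and wins.
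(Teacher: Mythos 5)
Your proposal is correct and follows essentially the same approach as the paper: Cat mirrors Mouse, the false-gate analysis forces Mouse along a path of false gates into the dead end $d$ where Cat captures it, and any deviation onto an escape route is met by immediate capture since the mirroring Cat shares the route's entry node. Your write-up simply fills in more of the bookkeeping (turn order, layer synchronicity, termination, and why threat edges and backtracking are unavailable to Mouse in the directed graph) than the paper's terser case analysis.
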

\begin{proof}
Cat's strategy is to mirror Mouse until Mouse can be captured. We consider all possible strategies of Mouse:
\begin{description}
\itemsep 0pt
\parskip 0pt
\item[Mouse takes a path in $\gmouse$.]
Since the circuit is false, the top gate (where Mouse starts) is false. If it is an OR gate then Mouse can only advance into a false gate. If it is an AND gate then Cat can threaten the true branch (if any) and force Mouse to advance to a false gate. Continuing in this way, since Mouse moves only through false gates Mouse must eventually advance to a false literal and then to the dead end to be captured (see item~\ref{item:dead-end} of the construction).
\item[Mouse moves onto an escape route.]
By the construction of the escape routes, since Cat is mirroring, Cat is adjacent to the node Mouse has just occupied and can capture Mouse.
\end{description}
\end{proof}

\subsection{Undirected case}

The undirected case adds two types of moves that were not possible before: moving (forward or backward) along a guard edge and moving backward along any other edge. We consider these in turn:
\begin{enumerate}
\itemsep 0pt
\parskip 0pt
\item \textit{If Cat mirrors Mouse and Mouse backtracks within $\gmouse$ then Cat wins.}
Suppose Mouse backtracks from $v$ to $u$. Since Cat has been mirroring Mouse Cat is at $\gcat(v)$ and there is a guard edge from $\gcat(v)$ to $u$ by which Cat captures Mouse.
\item \textit{If Cat mirrors Mouse and Mouse backtracks along a threat edge into $\gcat$ then Cat wins.}
This only happens if Mouse moves from M4 or M5 of a gadget to C3 or C2. By mirroring, Cat is at C4 or C5. Cat can capture Mouse by moving back along a diagonal edge.
\item \textit{If Cat mirrors Mouse and Mouse moves forward along a guard edge into $\gcat$ then Cat wins.}
Since Cat has been mirroring Mouse, there is an edge in $\gcat$ that corresponds to the guard edge that Cat can take to capture Mouse.
\item \textit{If Mouse moves toward $h$ in $\gmouse$ and Cat backtracks then Mouse wins.}
Because of escape routes there is always a forward path from any node to $h$ (except for the false literal nodes, which we can ignore for this case). If Cat backtracks then Mouse is one layer closer to $h$, with Mouse to move. Mouse can freely advance to $h$, and since all forward paths have the same length there is no other path Cat can take to catch up.
\end{enumerate}

It follows that none of the additional possibilities offered by the undirected case changes the outcome from that of the directed case.

\section*{Acknowledgments}
This work was supported by the National Science Foundation under grants CCF-0829754 and CCF-1255900. The author also thanks Lance Fortnow for suggesting this problem and for several helpful discussions.

\bibliographystyle{plain}
\bibliography{happy.bib}
\end{document}